\documentclass[conference, a4paper]{IEEEtran}
\IEEEoverridecommandlockouts

\ifCLASSINFOpdf
  \usepackage[pdftex]{graphicx}
  \DeclareGraphicsExtensions{.pdf,.jpeg,.png}
\else
\fi

\usepackage{cite}
\usepackage{amsmath,amssymb,amsfonts,amsthm}
\usepackage{algorithmic}
\usepackage{graphicx}
\usepackage{textcomp}
\usepackage[left=1.4cm, right=1.4cm, top=1.7cm]{geometry}
\usepackage{caption}
\usepackage{bm}
\usepackage{booktabs}
\usepackage{xcolor}

\usepackage{anyfontsize}
\usepackage{placeins}
\usepackage{subcaption, float}
\usepackage{lineno}

\theoremstyle{plain}      

\usepackage{cleveref}
\crefformat{equation}{(#2#1#3)}
\Crefname{algocfline}{Algorithm}{Algorithms}
\Crefname{algocf}{line}{lines}
\Crefname{AlgoLine}{Line}{Lines}
\crefname{AlgoLine}{line}{lines}

\makeatletter
\newcommand*\titleheader[1]{\gdef\@titleheader{#1}}
\AtBeginDocument{%
\let\st@red@title\@title
\def\@title{%
\bgroup\normalfont\normalsize\centering\@titleheader\par\egroup
\vskip0.2em\st@red@title}
}
\makeatother

\makeatletter
\renewcommand{\fnum@figure}{Figure \thefigure}
\makeatother

\title{ Noncooperative Virtual Queue Coordination via Uncertainty-Aware Correlated Equilibria  \\
\thanks{This work was supported by the National Aeronautics and Space Administration ULI Award under grants 80NSSC21M0071 and 80NSSC24M0070, and by the National Science Foundation CAREER award under grants 2336840 and 2211548.}
\vspace{0.5cm}
}

\titleheader{Second US-Europe Air Transportation Research and Development Symposium (ATRDS2026)}

\author{\IEEEauthorblockN{Jaehan Im, David Fridovich-Keil, Ufuk Topcu}
\IEEEauthorblockA{Department of Aerospace Engineering \\
The University of Texas at Austin \\
Austin, Texas, USA \\
jaehan.im@utexas.edu, dfk@utexas.edu, utopcu@utexas.edu}
}

\renewcommand\IEEEkeywordsname{Keywords}
\IEEEaftertitletext{\vspace{-1\baselineskip}}

\begin{document}

\maketitle
\pagestyle{plain}

\newcommand{\activea}{\mathcal{I}}
\newcommand{\aSet}{{\mathcal{I}}}
\newcommand{\xSet}{\bm{x}}
\newcommand{\activef}{\mathcal{F}}
\newcommand{\fSet}{{\mathcal{F}}}

\newcommand{\ones}{\bm 1}
\newcommand{\reals}{{\mbox{\bf R}}}
\newcommand{\integers}{{\mbox{\bf Z}}}
\newcommand{\symm}{{\mbox{\bf S}}}  

\newcommand{\nullspace}{{\mathcal N}}
\newcommand{\range}{{\mathcal R}}
\newcommand{\Rank}{\mathop{\bf Rank}}
\newcommand{\Tr}{\mathop{\bf Tr}}
\newcommand{\diag}{\mathop{\bf diag}}
\newcommand{\card}{\mathop{\bf card}}
\newcommand{\rank}{\mathop{\bf rank}}
\newcommand{\conv}{\mathop{\bf conv}}
\newcommand{\prox}{\bm{prox}}

\newcommand{\Expect}{\mathop{\bf E{}}}
\newcommand{\Prob}{\mathop{\bf Prob}}
\newcommand{\Co}{{\mathop {\bf Co}}} 
\newcommand{\dist}{\mathop{\bf dist{}}}
\newcommand{\argmin}{\mathop{\rm argmin}}
\newcommand{\argmax}{\mathop{\rm argmax}}
\newcommand{\epi}{\mathop{\bf epi}} 
\newcommand{\Vol}{\mathop{\bf vol}}
\newcommand{\dom}{\mathop{\bf dom}} 
\newcommand{\intr}{\mathop{\bf int}}
\newcommand{\sign}{\mathop{\bf sign}}
\newcommand{\norm}[1]{\left\lVert#1\right\rVert}
\newcommand{\mnorm}[1]{{\left\vert\kern-0.25ex\left\vert\kern-0.25ex\left\vert #1 
    \right\vert\kern-0.25ex\right\vert\kern-0.25ex\right\vert}}

\newtheorem{definition}{Definition} 
\newtheorem{theorem}{Theorem}
\newtheorem{lemma}{Lemma}
\newtheorem{corollary}{Corollary}
\newtheorem{remark}{Remark}
\newtheorem{proposition}{Proposition}
\newtheorem{assumption}{Assumption}
\newtheorem{example}{Example}

\newcommand{\cf}{{\it cf.}}
\newcommand{\eg}{{\it e.g.}}
\newcommand{\ie}{{\it i.e.}}
\newcommand{\etc}{{\it etc.}}

\newcommand{\putref}{{\color{red}[r]}}

\newcommand{\ba}[2][]{\todo[color=orange!40,size=\footnotesize,#1]{[BA] #2}}

\newcommand{\fix}[1]{\textcolor{red}{#1}}

\newcommand{\bigO}{\mathcal{O}}

\newcommand{\intSet}{\mathbb{Z}}
\newcommand{\realSet}{\mathbb{R}}
\newcommand{\natSet}{\mathbb{N}}
\newcommand{\zeroSet}{\bm{0}}
\newcommand{\state}{\bm{x}}

\newcommand{\param}{\kappa}

\newcommand{\plSet}{\mathbf{N}}
\newcommand{\chSet}{\mathbf{M}}
\newcommand{\opSet}{\mathcal{O}}

\newcommand{\xVec}{\bm{x}}
\newcommand{\coordFactor}{\mathbf{w}}

\newcommand{\circnum}[1]{%
  \raisebox{.5pt}{\textcircled{\raisebox{-.9pt}{#1}}}%
}

\noindent \begin{abstract}
Collaborative virtual queueing has been proposed as a mechanism to mitigate airport surface congestion while preserving airline autonomy over aircraft-level pushback decisions. A central coordinator can regulate aggregate pushback capacity but cannot directly control which specific aircraft are released, limiting its ability to steer system-level performance. 
We propose a noncooperative coordination mechanism for collaborative virtual queueing based on the correlated equilibrium concept, which enables the coordinator to provide incentive-compatible recommendations on aircraft-level pushback decisions without overriding airline autonomy. 
To account for uncertainty in airlines’ internal cost assessments, we introduce 
chance constraints into the correlated equilibrium formulation. 
This formulation provides explicit probabilistic guarantees on incentive compatibility, allowing the coordinator to adjust the confidence level with which airlines are expected to follow the recommended actions.
We further propose a scalable algorithm for computing chance-constrained correlated equilibria by exploiting a reduced-rank structure. 
Numerical experiments demonstrate that the proposed method scales to realistic traffic levels up to 210 eligible pushbacks per hour, reduces accumulated delay by up to approximately 8.9\% compared to current first-come-first-served schemes, and reveals a trade-off between confidence level, deviation robustness, and achievable cost efficiency.
\end{abstract}

\vspace{0.3cm}

\begin{IEEEkeywords}
airport surface management; collaborative virtual queue; pushback control; noncooperative coordination; correlated equilibrium; decision making under uncertainty
\end{IEEEkeywords}

\section{Introduction}

Airport surface congestion is increasingly managed through \textit{collaborative virtual queue} (CVQ) systems \cite{VQ_asdex, VQ_burgain, central_flexibility_VQ}. 
CVQ improves airport surface operation efficiency by replacing physical queues on taxiways with a virtual abstraction thereby avoiding unnecessary engine-on delays.
A key feature of CVQ is the division of authority, in which a central coordinator regulates aggregate gate pushback allowances per airline, whereas aircraft-level pushback decisions are made by individual airlines as illustrated in \Cref{fig:concept}. 

The division of authority in CVQ however, introduces a practical coordination challenge. First, aircraft-level delay costs are airline-specific, time-varying, and private, making them difficult for a central authority to observe or verify. Second, system-level performance is sensitive to which aircraft airlines choose to push back and to the internal decision policies they employ for aircraft selection. Even under identical pushback allowances, uncoordinated airline-specific heuristics can lead to different congestion patterns and surface efficiency outcomes \cite{VQ_asdex, VQ_burgain}. As a result, existing CVQ approaches, which primarily focus on pushback capacity allocation while treating airlines’ internal aircraft selection policies as a black box, provide the coordinator with limited ability to steer aircraft-level decisions toward system-optimal outcomes.

\begin{figure}[t!]
    \centering
    \includegraphics[width=0.9\linewidth]{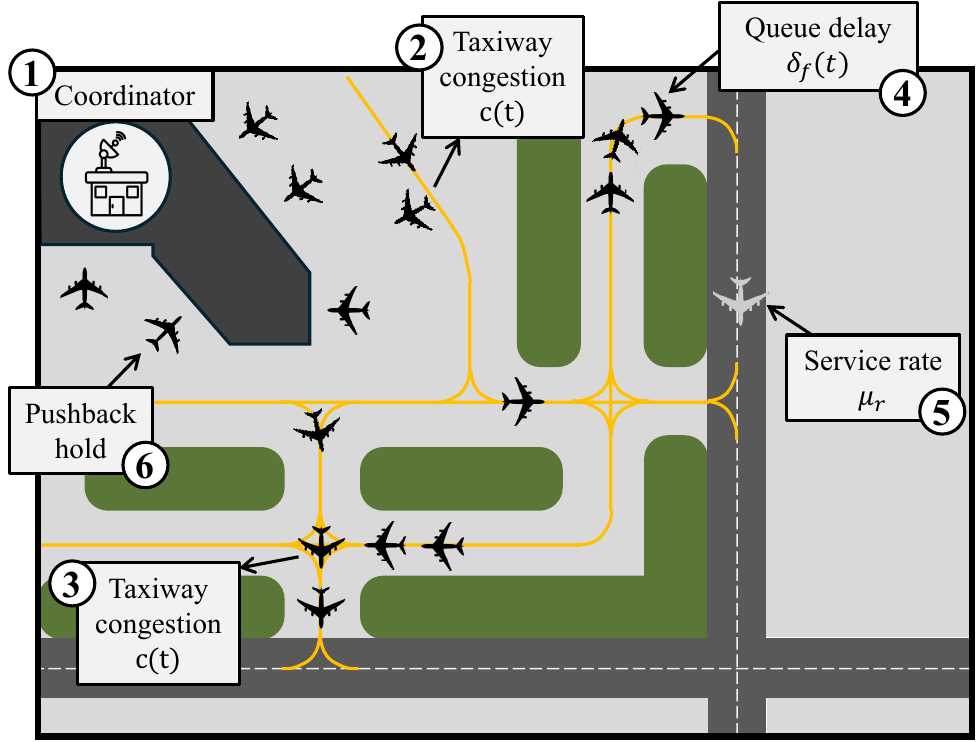}
    \caption{Illustration of the collaborative virtual queue coordination setting.
    A central coordinator regulates aggregate pushback capacity \circnum{1} but does not control aircraft-level decisions directly.
    Released aircraft contribute to taxiway congestion $c(t)$ (\circnum{2}–\circnum{3}) and to runway queue delay $\delta_f(t)$ \circnum{4}, while departures are governed by runway service rates $\mu_r$ \circnum{5}.
    Airlines retain autonomy over pushback decisions \circnum{6}, creating a noncooperative coordination problem.}  
    \label{fig:concept}
\end{figure}

We address this limitation by developing a noncooperative coordination mechanism that preserves airline autonomy while enabling the coordinator to actively steer system-level outcomes.
Rather than overriding airline decisions or imposing centralized control, we leverage the concept of \textit{correlated equilibrium} to provide incentive-aligned recommendations on which aircraft to pushback, in addition to how many.
By design, following the recommended action is rational for each airline given its cost structure, promoting voluntary compliance

Despite its theoretical appeal, correlated equilibrium relies on exact knowledge of agents’ cost functions, which is unrealistic in CVQ settings. To account for uncertainty in airlines’ internal cost assessments, we propose an uncertainty-aware correlated equilibrium framework.
Formally, this uncertainty awareness is captured through a \textit{chance-constrained correlated equilibrium} (CC-CE), 
which ensures that the coordinator's guidance is incentive compatible with at least a user-specified probability.

We further develop a scalable algorithm to compute CC-CE by exploiting a \textit{reduced-rank correlated equilibrium} structure \cite{rrce}.
By reconstructing a tractable subset of correlated equilibria from multiple pure Nash equilibria, the proposed approach significantly reduces the computational burden associated with direct correlated equilibrium computation.
This enables practical deployment of CC-CE in large-scale virtual queue settings.

The contributions of this work are threefold.
First, we reformulate the virtual queue coordination problem to enable aircraft-level influence via a noncooperative coordination mechanism.
By leveraging correlated signaling, the proposed approach allows a CVQ coordinator to steer which specific aircraft are pushed back, while fully preserving airline autonomy.
Second, we introduce a chance-constrained correlated equilibrium formulation that explicitly accounts for uncertainty in airlines’ internal cost assessments.
The proposed formulation provides a probabilistic guarantee on incentive compatibility, allowing the coordinator to quantify the confidence level with which airlines are expected to follow recommended actions under uncertain cost realizations.
Third, we develop a scalable algorithm for computing CC-CE by exploiting a reduced-rank correlated equilibrium structure \cite{rrce}, enabling tractable computation in large-scale CVQ settings.

\section{Related works}

\subsection{Collaborative virtual queueing}

Collaborative virtual queueing (CVQ) has been proposed as a mechanism to mitigate airport surface congestion by regulating gate departure pushbacks while preserving airline autonomy over aircraft-level decisions \cite{VQ_asdex, VQ_burgain, VQ_security}.
It aims to smooth surface traffic and reduce unnecessary taxiing and holding, reflecting the fact that congestion bottlenecks in busy airports typically arise near the runway rather than at the gate \cite{taxiSimul_bottleneck}.

A substantial body of prior work on airport surface management formulates pushback, taxi, and runway sequencing as centralized optimization problems \cite{icn_surface_cent_case, icn_surface_cent_mdi, central_surface_control}, including mixed--integer programs \cite{hamsa_review, central_airport_control, central_gdp, central_departure_schedule}, neural network-based approaches \cite{neural_surface_opt}, and route-optimization schemes \cite{central_route_opt}.
These methods are effective when a single decision maker can enforce aircraft-level actions, but they abstract away airline autonomy and strategic behavior by assuming full centralized control.

Such centralized formulations are misaligned with the authority structure of CVQ.
In CVQ operations, airlines retain control over which specific flights are pushed back within allocated slots as part of collaborative decision making \cite{central_flexibility_VQ}.
Consequently, CVQ cannot directly influence airlines’ internal flight-selection policies, and system-level performance may still degrade due to airline-specific decision heuristics \cite{VQ_asdex, VQ_burgain}.

In a similar spirit, several noncooperative and game-theoretic approaches have been explored in related air traffic and mobility contexts, although not directly in the virtual queueing setting \cite{game_aam, auction_aam, j_game_decent, j_nash, j_nego_over}.
These studies demonstrate how taxation \cite{j_nego_over}, bilevel formulations \cite{game_aam}, or auction mechanisms \cite{auction_aam, j_nego_over} can guide stakeholder behavior without assuming centralized enforcement.
However, they typically do not model the interactive decision-making among competing airlines \cite{game_aam} and often rely on strong assumptions such as truthful reporting \cite{auction_aam}, or fixed cost models without explicitly accounting for uncertainty in stakeholders’ internal cost structures \cite{rrce, game_aam, auction_aam, j_game_decent, j_nego_over}.

\subsection{Correlated equilibria under uncertainty}

The correlated equilibrium concept, introduced by Aumann as a generalization of the Nash equilibrium \cite{aumann}, has been widely studied as a mechanism for coordinating noncooperative agents, as it allows a coordinator to influence joint behavior while respecting individual's ability to make decisions that optimize their own objectives \cite{rrce, ce_effective_traffic, ce_effective_experimental}.
Existing work on correlated equilibria under uncertainty can be broadly grouped into several branches \cite{ce1,ce2,ce3,ce4,ce6,ce7}.

One line of research studies Bayes correlated equilibria, which model payoff uncertainty through states, types, and information structures and define rationality in terms of conditional expected utilities \cite{ce1,ce3}. In these formulations, uncertainty arises from players having incomplete information about payoff-relevant states or types; conditional on a given state, payoffs are deterministic. As a result, incentive compatibility is defined in expectation with respect to players’ beliefs, rather than through explicit probabilistic guarantees on realized deviation gains \cite{ce1,ce3,ce6,ce7}.

Other class of approaches models uncertainty directly in the agents' cost function through robust or distributionally robust formulations \cite{ce2,ce4,ce6,ce8,ce9}.
By enforcing worst-case guarantees over all admissible realizations or distributions of the payoff uncertainty within an ambiguity-set, these methods offer strong robustness but often lead to nonconvex optimization problems \cite{ce8,ce9}.
As a result, they cannot fully exploit the linear and convex structure of the classical correlated equilibrium problem and may be overly conservative in applications where uncertainty arises from stochastic fluctuations around nominal costs.

Finally, bounded-rationality models such as quantal response equilibria and their correlated variants replace exact best responses with noisy, payoff-responsive choice rules \cite{qre1,qre2}. Equilibrium is defined as a fixed point of these stochastic response mappings. However, these models do not impose explicit probabilistic guarantees on profitable deviations; randomness is embedded in the response mechanism rather than regulated through deviation-based reliability constraints.

\section{Collaborative virtual queue problem}
\label{sec:cvq}
We formulate collaborative virtual queue (CVQ) coordination as a noncooperative game among airlines, with a ground controller acting as a coordinator. At each decision epoch, a coordinator determines the aggregate pushback capacity, while airlines retain autonomy over aircraft-level pushback decisions.
The coordinator does not enforce aircraft-level actions, but instead provides recommendations with the objective of minimizing system-level surface delay by influencing which aircraft are pushed back at each decision epoch.

\subsection{Virtual queue coordination problem}

Time is discretized into decision \emph{epochs} indexed by
$t \in \intSet_{\geq 0}$.
Let $R \in \intSet_{> 0}$ denote the number of departure runways at the airport and $\mathcal F(t)$ denote the set of aircraft eligible for pushback at epoch $t$.
At each epoch $t$, the coordinator observes the surface \emph{state}
$s(t) := \big(q(t), \mathcal F(t)\big) \in \mathcal{S}$,
where $\mathcal S$ denotes the state space, 
and $q(t) \in \intSet_{\ge 0}^{R}$ denotes the vector of runway queue lengths whose $r$-th entry
$q_r(t)$ represents the number of aircraft physically waiting for departure at runway $r \in \{1,\dots,R\}$.

Let $\activea(t)$ denote the set of airlines that have at least one
eligible aircraft at epoch $t$.
For each airline $i \in \activea(t)$, let
$\mathcal F_i(t) \subseteq \mathcal F(t)$
denote the subset of eligible aircraft operated by airline $i$.

\subsubsection{Airline actions}
For each active airline $i \in \activea(t)$, let $\mathcal X_i(t)$ denote the set of feasible aircraft-level pushback actions available to airline $i$ at epoch $t$.
Airline $i$ selects an action $x_i(t)$ defined as:
\begin{equation}
x_i(t) \in \mathcal X_i(t) := 2^{\mathcal F_i(t)} .
\end{equation}
The resulting joint action is denoted by
\begin{equation}
x(t) := \big(x_i(t)\big)_{i \in \activea(t)} \in \mathcal X(t),
\end{equation}
where $\mathcal X(t) := \prod_i \mathcal X_i(t)$.
This action representation directly determines which aircraft enter the surface queues at epoch $t$.

\subsubsection{Eligible aircraft set dynamics}
Let $\mathcal P(t)$ denote the set of aircraft scheduled to depart at epoch $t$ according to
the published flight schedule.
Each aircraft $f \in \mathcal P(t)$ is associated with a scheduled departure time $\tau_f$, a designated departure runway $r(f) \in \{1,\dots,R\}$, and an aircraft class (e.g., small, medium, or heavy).

A set of aircraft that were in an eligible set but not pushed back in the previous epoch $t-1$ remain eligible in subsequent epoch $t$. Accordingly, the eligible aircraft set evolves as follows:
\begin{equation}
\mathcal F(t) =
\mathcal P(t)
\;\cup\;
\bigl\{ f \in \mathcal F(t-1) \;\big|\; f \notin \mathcal{D}(t-1) \bigr\},
\label{eq:eligible_update}
\end{equation}
where $\mathcal{D}(t) := \bigcup_{i \in \activea(t)} x_i(t)$ denotes the aircraft that are physically pushed back in epoch $t$.

\subsubsection{Runway queue dynamics}
Let runway service rate $\mu_r \in \intSet_{\geq 0}$ denote the number of aircraft that departs from runway $r$ per epoch, and let
\begin{equation} \label{eq:pushbacknumber}
a_r(t) := \sum_{f \in \mathcal D(t)} \ones \{r(f)=r\},
\end{equation}
where $\ones$ is an indicator function.
\Cref{eq:pushbacknumber} denote the number of aircraft pushed back to runway $r$ at epoch $t$.
The runway queue evolves according to
\begin{equation}
q_r(t+1)
=
\max\bigl\{ 0,\, q_r(t) + a_r(t) - \mu_r\bigr\},
\quad \forall r.
\label{eq:queue_update}
\end{equation}

\subsubsection{Surface state evolution}
Combining \cref{eq:eligible_update} and \cref{eq:queue_update}, the surface state evolves according to
\begin{equation}
s(t+1) = g\bigl(s(t), \xSet(t)\bigr),
\end{equation}
where $g$ captures both runway service dynamics and the evolution of the eligible aircraft set.

\subsubsection{Airline costs}

Airlines are modeled as noncooperative agents minimizing individual operational costs.
At epoch $t$, airline $i$ incurs a cost
\begin{equation} \label{eq:airCost}
J_i^t : \mathcal X(t) \times \mathcal S \;\to\; \realSet.
\end{equation}

The cost $J_i^t(\xSet(t); s(t))$ captures aircraft class-weighted delay and congestion effects induced by the joint pushback decisions and the resulting surface dynamics.
For simplicity, we assume that airlines myopically optimize \Cref{eq:airCost} at the current $t$ and do not consider future costs.

\subsubsection{Coordinator objective}

The coordinator evaluates joint actions using a system-level objective
\begin{equation} \label{eq:coordCost}
J_{\mathrm{coord}}^t : \mathcal X(t) \times \mathcal S \;\to\; \realSet,
\end{equation}
which aggregates aircraft-level delays over the epoch following the pushback decisions. 
The mecahnism by which the coordinator induces an action introduced in \Cref{sec:ce}.

\subsection{Incorporating airline cost uncertainty}

The coordinator does not observe airlines' true cost functions $J_i^t$.
Instead, it maintains nominal cost models $\bar J_i^t$, which may differ from the airlines' realized operational costs.

We model this mismatch in terms of unilateral deviations.
Specifically, for any fixed joint action profile $(x_i,x_{-i})$ and any unilateral alternative $x_i'$, 
the true deviation cost difference is assumed to satisfy
\begin{equation}
\begin{aligned}
 J_i&(x_i,x_{-i}) -  J_i(x_i',x_{-i})\\
&=\bar J_i(x_i,x_{-i}) - \bar J_i(x_i',x_{-i})+\eta_i,
\end{aligned}
\label{eq:noise_model}
\end{equation}
where random variable $\eta_i \sim \nu_i$ represents airline-level uncertainty in deviation incentives and is drawn once per airline at each epoch.
Importantly, $\eta_i$ is common across all deviation comparisons for airline $i$ within the same epoch.

\medskip

In every epoch, each airline chooses an action $x_i(t)$ so as to minimize \cref{eq:airCost}, while the coordinator's objective is to induce a joint action that minimizes \cref{eq:coordCost}. However, the coordinator cannot enforce aircraft-level actions or demand truthful disclosure of airline cost models. 
These issues motivate the proposed coordination mechanism, discussed below.

\section{Coordination via correlated equilibrium \\ in virtual queue problem}
\label{sec:ce}

The correlated equilibrium concept provides a coordination mechanism for noncooperative agents without requiring centralized enforcement.
Rather than prescribing a deterministic joint action, a coordinator samples an action profile from a joint distribution and privately recommends that each agent follows the corresponding action.
Following the recommendation is voluntary; nevertheless, rational agents find it optimal to comply when the distribution satisfies the correlated equilibrium conditions.

\subsection{Correlated equilibrium}
Formally, consider the game induced at a fixed surface state $s(t)$.
Recall that $\mathcal X := \prod_{i\in\mathcal I(t)} \mathcal X_i(t)$ denotes the joint action space, and
let $z \in \Delta(\mathcal X)$ be a probability distribution over joint actions.

\begin{definition}[Correlated Equilibrium] \label{def:ce}
A distribution $z \in \Delta(\mathcal X)$ is a correlated equilibrium if the following holds:
\begin{equation}
\mathbb E_{x_{-i} \sim z(\cdot \mid x_i)}
\!\left[
J_i(x_i, x_{-i}; s(t)) - J_i(x_i', x_{-i}; s(t))
\right]
\le 0,
\label{eq:ce_def}
\end{equation}
for every airline $i$, every recommended action $x_i \in \mathcal X_i(t)$, and every unilateral deviation $x_i' \in \mathcal X_i(t)$.
\end{definition}

The following lemma describes a basic geometric property of correlated equilibria.

\begin{lemma}[Convexity of correlated equilibria] \label{lem:ceiscvx}
The set of correlated equilibria is a convex subset of the probability simplex $\Delta(\mathcal X)$.
\end{lemma}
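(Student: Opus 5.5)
The plan is to rewrite the conditional-expectation constraints in \Cref{def:ce} as constraints that are \emph{linear} in $z$. Once that is done, the set of correlated equilibria is an intersection of half-spaces with the simplex $\Delta(\mathcal X)$. Since half-spaces and the simplex are convex, so is the intersection.

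The key step is to deal with the conditioning on the recommendation $x_i$. Write $z(x_i) := \sum_{x_{-i}} z(x_i,x_{-i})$ for the marginal probability that airline $i$ is recommended $x_i$. Whenever $z(x_i)>0$, the conditional distribution is $z(x_{-i}\mid x_i) = z(x_i,x_{-i})/z(x_i)$. Multiplying \cref{eq:ce_def} by the positive scalar $z(x_i)$ shows that, for each $i$, $x_i$ and $x_i'$, the condition is equivalent to
\begin{equation*}
\sum_{x_{-i}} z(x_i,x_{-i})\,\bigl[J_i(x_i,x_{-i};s(t)) - J_i(x_i',x_{-i};s(t))\bigr] \le 0 .
\end{equation*}
When $z(x_i)=0$, the recommendation $x_i$ is never issued, so its incentive condition is vacuous. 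The displayed inequality also holds trivially in that case, because every term $z(x_i,x_{-i})$ vanishes. Hence in all cases the correlated equilibrium set equals the set of $z\in\Delta(\mathcal X)$ satisfying this family of inequalities. This is the standard convention for correlated equilibria.

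Each displayed inequality has the form $a^\top z \le 0$. The vector $a$ depends only on the fixed costs $J_i$, the fixed state $s(t)$, and the indices $(i,x_i,x_i')$, not on $z$. So each inequality defines a closed half-space. The simplex itself is cut out by the linear constraints $z\ge 0$ and $\ones^\top z = 1$. The joint action space is finite, so there are finitely many constraints.

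To finish, take two equilibria $z^1,z^2$ and any $\lambda\in[0,1]$. By linearity,
\begin{equation*}
a^\top(\lambda z^1 + (1-\lambda) z^2) = \lambda a^\top z^1 + (1-\lambda) a^\top z^2 \le 0,
\end{equation*}
and the convex combination stays in the simplex. It is therefore again a correlated equilibrium, and the set is in fact a convex polytope. The only delicate point is the zero-marginal case above. Once the definition is read in its multiplied-out form, nothing else needs care.
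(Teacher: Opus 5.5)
Your proof is correct and follows the same route as the paper: the correlated-equilibrium constraints are linear inequalities in $z$, and intersecting them with the convex simplex $\Delta(\mathcal X)$ gives a convex set. The paper simply asserts that \cref{eq:ce_def} is linear in $z$. You justify it by multiplying through by the marginal $z(x_i)$ and handling $z(x_i)=0$ separately, which is exactly the step needed, because the conditional form is a ratio in $z$ rather than literally linear.
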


\begin{proof}
The constraints in \Cref{eq:ce_def} are linear in $z$, and $\Delta(\mathcal X)$ is convex.
Hence, the set of correlated equilibria is convex.
\end{proof}

Under a correlated equilibrium $z$, the coordinator samples a joint action from $z$ and privately recommends each component to the corresponding airline. 
Given $z$, each airline minimizes its expected cost conditional on the recommendation. 
By \Cref{def:ce}, following the recommendation is incentive-compatible. 
Thus, the coordinator neither modifies airlines’ objective functions nor enforces compliance; coordination arises naturally from incentive alignment. 
This makes correlated equilibrium well suited for collaborative virtual queues (CVQ), where airline autonomy must be preserved.

\begin{example}[Coordination via correlation]
Consider a scenario where two drivers approach an intersection and choose either 
\emph{Go} (G) or \emph{Stop} (S).
Their costs are given by
\small{
\begin{equation}
\begin{array}{c|cc}
 & G & S \\
\hline
G & (5,5) & (-1,1) \\
S & (1,-1) & (1,1)
\end{array}
\end{equation}
}
where rows correspond to Driver~1 and columns to Driver~2.
If both go, they incur a large penalty (5,5).
If both stop, both incur a mild delay (1,1).
If exactly one goes, the driver who goes benefits ($-1$) 
while the other incurs a mild delay ($1$).
Consider a correlation device that selects $(G,S)$ and $(S,G)$ with probability $\frac{1}{2}$.
If a driver is recommended to go, following the recommendation yields cost $-1$,
whereas deviating leads to $(S,S)$ with cost $1$.
If recommended to stop, following yields cost $1$,
whereas deviating leads to $(G,G)$ with cost $5$.
Thus, conditioned on the received signal, deviation strictly increases cost.
The correlation device therefore avoids collision or inefficient mutual stopping.
\end{example}

\subsection{Chance-constrained correlated equilibrium} \label{sec:ccce}
In practice, the coordinator does not have exact knowledge of airlines' cost functions.
Deviation incentives are therefore uncertain and must be evaluated probabilistically.
To account for this, we extend the correlated equilibrium concept.

For airline $i$, given an action profile $(x_i,x_{-i})$ and a deviation $x_i'$, 
define the deviation cost difference under the true model as
\begin{equation}
\Delta J_i(x_i,x_i',x_{-i})
:= J_i(x_i,x_{-i}) - J_i(x_i',x_{-i}),
\end{equation}
and under the nominal model as
\begin{equation}
\bar{\Delta} J_i(x_i,x_i',x_{-i})
:= \bar J_i(x_i,x_{-i}) - \bar J_i(x_i',x_{-i}).
\end{equation}
Recall that the realized deviation cost difference is subject to additive stochastic uncertainty,
\begin{equation}
\Delta J_i(x_i,x_i',x_{-i})
=
\bar{\Delta} J_i(x_i,x_i',x_{-i})
+
\eta_i,
\label{eq:noise}
\end{equation}
where $\eta_i \sim \nu_i$ captures modeling error and operational variability in airline $i$'s cost.
The disturbance $\eta_i$ represents an airline-level common error and is assumed to be identical across all realizations of $x_{-i}$, rather than being resampled for each action profile.
For numerical evaluation in \Cref{sec:experiment}, we consider the special case $\nu_i = N(0, \sigma_i^2)$, where $\sigma_i$ is measured in the same cost unites as \Cref{eq:noise}.

This leads to the following definition of a chance-constrained correlated equilibrium.
\begin{definition}[Chance-constrained correlated equilibrium]
A distribution $z \in \Delta(\mathcal X)$ is a chance-constrained correlated equilibrium (CC-CE)
with confidence level $\alpha \in [0,1]$ if,
\begin{equation}
\mathbb P_{\nu_i}\!\left(
\mathbb E_{x_{-i}\sim z(\cdot \mid x_i)}
\bigl[
\bar \Delta J_i(x_i,x_i',x_{-i})
\bigr]
\le 0
\right)
\ge \alpha,
\label{eq:ccce_prob}
\end{equation}
for all airlines $i$, all recommended actions $x_i \in \mathcal X_i(t)$, and all
deviations $x_i' \in \mathcal X_i(t) \setminus x_i$,
\end{definition}

More generally, suppose that the deviation distribution $\nu_i$ admits a cumulative distribution function $\Phi_{\nu_i}: \realSet \to [0,1]$.
Then the chance constraint~\cref{eq:ccce_prob} can be expressed as:
\begin{equation}
\mathbb E_{x_{-i}\sim z(\cdot \mid x_i)}
[\bar{\Delta} J_i(x_i,x_i',x_{-i})]
+
\Phi_{\nu_i}^{-1}(\alpha)
\le 0 ,
\label{eq:ccce_det}
\end{equation}
where $\Phi_{\nu_i}^{-1}$ denotes the inverse cumulative distribution function of distribution $\nu_i$.

An important property of CC-CE is that it preserves the convexity of the classical correlated equilibrium set as stated in \cref{lem:ceiscvx}.

\begin{theorem}[Convexity of CC-CE] \label{thm:convex_ccce}
The feasible set of chance-constrained correlated equilibria is a convex polytope.
\begin{proof}
Each chance constraint reduces to an affine inequality in the joint distribution $z$ as shown in \cref{eq:ccce_det}.
Since $\Delta(\mathcal X)$ is convex, the CC-CE feasible set is an intersection of finitely many affine half-spaces and is therefore a convex polytope.
\end{proof}
\end{theorem}
The confidence level $\alpha$ explicitly quantifies the tradeoff between robustness and feasibility:
larger $\alpha$ yields stronger incentive guarantees but shrinks the feasible set.

\subsection{Scalable computation via reduced-rank correlated equilibria}
Direct computation of correlated equilibrium is known for intractability to the number of the joint actions in the game. This problem holds in CC-CE computation and in virtual queue coordination problem as well since the joint action space grows exponentially with the number of airlines and aircraft.
To address this challenge, we build on the reduced-rank correlated equilibrium (RRCE) framework~\cite{rrce}.

RRCE approximates the correlated equilibrium polytope by the convex hull of a finite set of Nash equilibria, exploiting the fact that Nash equilibria can be computed without enumerating all joint actions and are therefore computationally less demanding than solving for a correlated equilibrium directly.
This idea relies on two classical results: every Nash equilibrium is a correlated equilibrium \cite{netoce}, and the set of correlated equilibria is convex (c.f., \cref{lem:ceiscvx})). 
We extend these ideas to computing the reduced-rank CC-CEs.

\begin{definition}[Chance-constrained pure Nash equilibrium]
A pure strategy profile $x \in \mathcal X$ is a chance-constrained pure Nash equilibrium (CC-PNE)
with confidence level $\alpha$ if,
\begin{equation} \label{eq:ccpne_prob}
\mathbb P_{\nu_i}\!\left(
\Delta J_i(x_i, x_i', x_{-i}) \le 0
\right)
\ge \alpha ,
\end{equation}
for every airline $i$ and every unilateral deviation $x_i' \in \mathcal X \setminus x_i$.
\end{definition}

The CC-PNE has a useful relationship to CC-CE as shown in the following lemma.
\begin{lemma} \label{lem:pnetoccce}
Every CC-PNE induces a CC-CE distribution.
\begin{proof}
Let $x \in \mathcal X$ be a CC-PNE with confidence level $\alpha$.
By the definition of pure Nash equilibrium, the joint action distribution $z$ concentrates all mass on $x$, i.e., $z(x)=1$ and $z(\tilde x)=0$ for all $\tilde x \neq x$. Thus, the following holds:
\begin{equation} \label{eq:lem1_link}
\begin{aligned}
    \mathbb P&\!\left(
    \mathbb E_{\tilde x_{-i}\sim z(\cdot \mid x_i)}
    \bigl[\Delta J_i(x_i,x_i',\tilde x_{-i})\bigr]
    \le 0
    \right) \\
    &= 
    \mathbb P\!\left(
    \Delta J_i(x_i,x_i',x_{-i}) \le 0
    \right).
\end{aligned}
\end{equation}
\noindent
Since $x$ is a CC-PNE, we can combine \cref{eq:ccpne_prob} with \cref{eq:lem1_link} yielding:
\begin{equation}
\mathbb P\!\left(\mathbb E_{\tilde x_{-i}\sim z(\cdot \mid x_i)}
\bigl[\Delta J_i(x_i,x_i',\tilde x_{-i})\bigr]\le 0\right) \geq \alpha.
\end{equation}
This completes the proof.
\end{proof}

\end{lemma}
\noindent
This observation extends naturally to convex combinations of CC-PNE.

\begin{theorem} \label{thm:ccpneisccce}
The convex hull of CC-PNE distributions is a subset of the CC-CE feasible set.
\begin{proof}
Every CC-PNE induces a CC-CE distribution by \Cref{lem:pnetoccce}.
Since the CC-CE feasible set is convex by \Cref{thm:convex_ccce},
any convex combination of CC-PNE distributions also lies in the CC-CE set.
\end{proof}
\end{theorem}

This result justifies extending the RRCE paradigm to the uncertainty-aware setting.
By computing a finite collection of CC-PNE and constructing their convex hull,
the coordinator can generate CC-CE distributions without solving the full CC-CE program. This approach significantly improves scalability in large instances \cite{rrce}.

\subsection{Coordination via correlated equilibrium selection}

We introduced the notion of correlated equilibrium and its chance-constrained extension to incorporate cost uncertainty, along with a reduced-rank construction for scalable computation. 
We now formalize the noncooperative coordination mechanism based on these equilibrium concepts.

At each decision epoch $t$, the coordinator does not enforce a deterministic joint action.
Instead, it computes an optimal distribution $z \in \Delta(\mathcal X(t))$, samples a joint action $x \sim z$, and privately recommends to each airline its corresponding component $x_i$.
Compliance is voluntary as those recommendations are incentive-compatible to the airlines.
Therefore, noncooperative coordination via this correlated equilibrium mechanism reduces to an equilibrium selection problem over the feasible CC-CE set.

\subsubsection{Full CC-CE coordination problem}
Recall that $J_{\mathrm{coord}}^t(x; s(t))$ denotes the system-level objective evaluated at joint action $x \in \mathcal X(t)$. 
The expected system cost under distribution $z$ is
\begin{equation} \label{eq:coordObjFunc}
\mathbb E_{x \sim z}
\left[
J_{\mathrm{coord}}^t(x; s(t))
\right]
=
\sum_{x \in \mathcal X(t)} z(x)\,
J_{\mathrm{coord}}^t(x; s(t)).
\end{equation}

Then, the coordination problem based on CC-CE is formulated as:
\begin{equation}
\begin{aligned}
\min_{z \in \Delta(\mathcal X(t))} \quad
& \sum_{x \in \mathcal X(t)} z(x)\,
J_{\mathrm{coord}}^t(x; s(t)) \\
\text{s.t.} \quad
& z \text{ satisfies \Cref{eq:ccce_prob}}.
\end{aligned}
\label{eq:ccce_opt}
\end{equation}

Since the CC-CE constraints are affine in $z$ and the simplex constraints are linear,
\cref{eq:ccce_opt} is a linear program.
This formulation provides the optimal incentive-compatible coordination policy with $\alpha \in [0, 1]$ confidence.

\subsubsection{Reduced-rank CC-CE coordination}

To improve scalability, we restrict the feasible set to convex combinations of CC-PNE.
Let $\{x^{(k)}\}_{k=1}^m$ denote a finite set of $m$ CC-PNEs.
Define the distribution
\begin{equation} \label{eq:xtoz}
z(x) = \sum_{k=1}^m \lambda_k \mathbf{1}\{x = x^{(k)}\},
\quad
\lambda \in \Delta_m,
\end{equation}
where $\Delta_m$ is the $m$-dimensional simplex and $\ones$ is an indicator function. 
Substituting \Cref{eq:xtoz} to \Cref{eq:coordObjFunc} yields
\begin{equation}
\mathbb E_{x \sim z}
\left[
J_{\mathrm{coord}}^t(x; s(t))
\right]
=
\sum_{k=1}^m \lambda_k
J_{\mathrm{coord}}^t(x^{(k)}; s(t)).
\end{equation}

Then, the reduced coordination problem becomes
\begin{equation}
\begin{aligned}
\min_{\lambda \in \Delta_m} \quad
& \sum_{k=1}^m \lambda_k\,
J_{\mathrm{coord}}^t(x^{(k)}; s(t)).
\end{aligned}
\label{eq:rrccce_opt}
\end{equation}

By \Cref{thm:ccpneisccce}, every feasible solution of
\cref{eq:rrccce_opt} induces a valid CC-CE distribution.
This reduced formulation significantly lowers computational complexity,
as it avoids enumerating the full joint action space and instead operates over
a small set of CC-PNE.

\medskip

In summary, noncooperative coordination in CVQ is achieved by solving an equilibrium selection problem:
either the full CC-CE program~\cref{eq:ccce_opt} or its reduced-rank counterpart~\cref{eq:rrccce_opt}.
The former provides optimal coordination under uncertainty,
while the latter offers a scalable approximation suitable for large-scale airport operations.

\section{Numerical experiment} \label{sec:experiment}
\subsection{Scenario description}
We consider a collaborative virtual queue (CVQ) surface-operations scenario with each epoch corresponding to four minutes of real time.
At each epoch, the coordinator observes the current runway queue lengths $q(t)$ and the set of eligible aircraft $\fSet(t)$.
Each aircraft $f\in\fSet(t)$ is associated with a designated runway $r(f)$ and an aircraft class (i.e., small/medium/heavy). Airlines select subsets of their eligible aircraft for pushback, inducing runway inflows $a_r(t)$.
The runway queues evolve according to the dynamics introduced in \Cref{sec:cvq}, with service rates $\mu_r$ aircraft per epoch. Airlines act according to their realized cost \cref{eq:noise}; in particular, deviation incentives are subject to additive Gaussian perturbations $\nu_i \sim \mathcal{N}(0,\sigma_i^2)$ that are unobserved by the coordinator. The overall CVQ coordination loop is illustrated in \Cref{fig:concept}.

\subsubsection{Costs and objectives}

We evaluate costs over the eligible flight set $\fSet(t)$ under the realized joint pushback decision $x(t)$.
Let $\tau_f$ denote the scheduled departure time of flight $f$ in minutes.
Since each epoch represents four minutes, the accumulated lateness (in minutes) at epoch $t$ is defined as
\begin{equation}
\ell_f(t):=\max\{0,\, 4t-\tau_f\},
\end{equation}
which captures realized schedule delay.
We define the one-epoch waiting penalty
\begin{equation}
w_f(t):=
\begin{cases}
0, & f\in \mathcal D(t),\\
4, & f\notin \mathcal D(t),
\end{cases}
\end{equation}
where $\mathcal D(t)$ is the set of pushed flights; thus, a flight that is not released incurs an additional four-minute delay.

Let $\tilde q_r(t+1, x(t))$ denote the predicted next-epoch queue length at runway $r$ induced by $x(t)$: $\tilde q_r(t+1, x(t)) = q_r(t) + a_r(t)$, where $a_r(t)$ is the number of flights in $x(t)$ assigned to runway $r$ as defined in \Cref{eq:pushbacknumber}.
The runway queue-induced delay for flight $f$ is
\begin{equation}
\delta_f(t)
=
\frac{\tilde q_{r(f)}(t+1, x(t))}{\mu_{r(f)}} \cdot 4,
\end{equation}
which approximates the expected waiting time (in minutes) required for proceding aircraft in the departure queue to be served.
We also include a taxiway congestion penalty
\begin{equation}
c(t)=\bigl(\max\{0,\, |\mathcal D(t)|-4\}\bigr)^2,
\end{equation}
which reflects that the taxiway can accommodate up to four simultaneously released aircraft without congestion effects; additional releases beyond this threshold incur a quadratic penalty.
The per-flight delay is
{\small
\begin{equation}
d_f(t)=
\begin{cases}
(\ell_f(t)+w_f(t))^2 + \delta_f(t) + c(t), & \ell_f(t)+w_f(t)>10,\\
\ell_f(t) + w_f(t) + \delta_f(t) + c(t), & \text{otherwise}.
\end{cases}
\end{equation}
}

\noindent When the accumulated pushback delay $\ell_f(t)+w_f(t)$ exceeds 10 minutes,
the airline incurs a quadratic penalty, reflecting increased airline/passenger dissatisfaction. This delay model reflects a trade-off between congestion from schedule delay from excessive pushbacks, and is consistent with commonly used delay models in departure-management studies \cite{quadratic}.

Airline $i$ minimizes the class-weighted delay
\begin{equation} \label{eq:airobj}
J_i^t(x(t);s(t))=\sum_{f\in\fSet_i(t)} \omega_{c(f)}\, d_f(t),
\end{equation}
where $\omega_{c(f)} \in \realSet_{\geq 0}$ is a weight determined by the aircraft class of flight $f$.
In the experiments, we use
$
[\omega_{\text{heavy}}, \omega_{\text{medium}}, \omega_{\text{small}}]
= [1.2, 1.0, 0.75]
$.
The coordinator evaluates the unweighted total delay
\begin{equation} \label{eq:coordObj}
J_{\mathrm{coord}}^t(x(t);s(t))=\sum_{f\in\fSet(t)} d_f(t).
\end{equation}
Both airlines and the coordinator seek to reduce delay, but airlines apply class-dependent weights whereas the coordinator evaluates unweighted total delay. This reflects a standard distinction between airline objectives (e.g., fleet-dependent valuations) and an airport/coordinator objective that prioritizes aggregate throughput and surface delay \cite{VQ_asdex,VQ_burgain}.

\subsection{Evaluation metrics}
We evaluate solver performance using three metrics:

\paragraph{\textbf{Delay cost}}
We report the realized delay cost computed according to the cost model defined in \cref{eq:coordObj}.
This metric captures both schedule delay and congestion-induced delay.

\paragraph{\textbf{Computation time}}
We measure wall-clock computation time required to compute the equilibrium at each epoch.

\paragraph{\textbf{Deviation rate}}
We report the deviation rate, defined as the fraction of trials in which at least one airline deviates from the recommended action.

\subsection{Baseline algorithms}
We compare five coordination mechanisms:

\paragraph{\texttt{\texttt{CENT}}}
A fully centralized benchmark assuming complete control authority over all aircraft decisions. This provides an idealized lower bound on achievable cost.

\paragraph{\texttt{\texttt{FCFS}}}
A decentralized first-come-first-served mechanism reflecting current operational practice \cite{VQ_asdex,VQ_burgain}. 
The coordinator solves the same aggregate capacity optimization problem as in \texttt{CENT}, but does not select specific aircraft. Each airline then independently selects the aircraft with the earliest scheduled departure.

\paragraph{\texttt{\texttt{Full-CCCE}}}
A full correlated equilibrium solver that explicitly accounts for cost uncertainty using chance-constrained correlated equilibrium (CC-CE) formulation \cref{eq:ccce_opt}.

\paragraph{\texttt{\texttt{RR-Nominal}}}
A reduced-rank correlated equilibrium (RRCE)-based solver that solves 
\cref{eq:rrccce_opt} under nominal costs, without accounting for cost uncertainty.

\paragraph{\texttt{\texttt{RR-CCCE}}}
A RRCE-based solver that incorporates cost uncertainty into the equilibrium constraints.

\medskip

Table~\ref{tab:baseline} summarizes the key differences among the compared coordination mechanisms.
\begin{table}[h]
\centering
\caption{Comparison of coordination mechanisms}
\label{tab:baseline}
\begin{tabular}{lcc}
\toprule
Method & CE formulation & Uncertainty-aware \\
\midrule
\texttt{CENT} & -- & -- \\
\texttt{FCFS} & -- & -- \\
\texttt{Full-CCCE} & Full formulation \cref{eq:ccce_opt} & Yes \\
\texttt{RR-Nominal} & Reduced formulation \cref{eq:rrccce_opt} & No \\
\texttt{RR-CCCE} & Reduced formulation \cref{eq:rrccce_opt} & Yes \\
\bottomrule
\end{tabular}
\end{table}

\begin{remark}
For the RRCE-based methods, we compute the set of pure Nash equilibria by enumerating all joint actions and checking the equilibrium condition.
While exhaustive enumeration is utilized, the considered single-epoch instances remain of moderate size, while reflecting realistic busy-airport conditions \cite{example_atlanta}, making this approach computationally tractable.
\end{remark}

\subsection{Experimental plan}

We evaluate the proposed coordination mechanism at two levels.
First, we conduct a single-epoch Monte Carlo experiment.
This setting enables a controlled comparison of equilibrium computation and resulting costs across solvers. 
Second, we perform a 16-epoch (one-hour simulation time) experiment in which the single-epoch mechanism is applied sequentially. This experiment assesses cumulative delay behavior and demonstrates operational applicability.

\subsubsection{Experiment 1 (Single-epoch simulation)}
In the single-epoch experiment, each trial randomly generates a busy-airport epoch instance. We fix the number of runways ($R=2$), service rates ($[\mu_1,\mu_2]=[2,2]$), and initial queue lengths $q(0)=[3,4]$.
The experimental variables are:
(i) the eligible aircraft count $|\fSet(t)|$,
(ii) the confidence level $\alpha$, and
(iii) the deviation-noise level $\sigma$ introduced in \Cref{sec:ccce}. 
We evaluate $|\fSet(t)| \in \{6,7,\dots,14\}$,
$\alpha \in [0.30, 0.99]$, and
$\sigma \in [0,45]$ (recall that $\sigma$ is measured in the same cost units as \Cref{eq:noise}).
For each eligible flight, scheduled lateness is independently drawn as 
$L_f \sim \mathcal{N}(0, 10^2)$ and truncated via $L_f \gets \max\{0, L_f\}$, so that negative realizations correspond to on-time departures.

The aircraft count $|\fSet(t)|=14$ per epoch corresponds to approximately 210 eligible pushbacks per hour.
For reference, Hartsfield-Jackson Atlanta International Airport—one of the busiest airports worldwide—processes roughly 96 departures per hour \cite{example_atlanta}.
Thus, the tested range spans realistic and congested regimes.

\subsubsection{Experiment 2 (Multi-epoch rolling simulation)}

\begin{figure}[t!]
    \centering
    \includegraphics[width=0.85\linewidth]{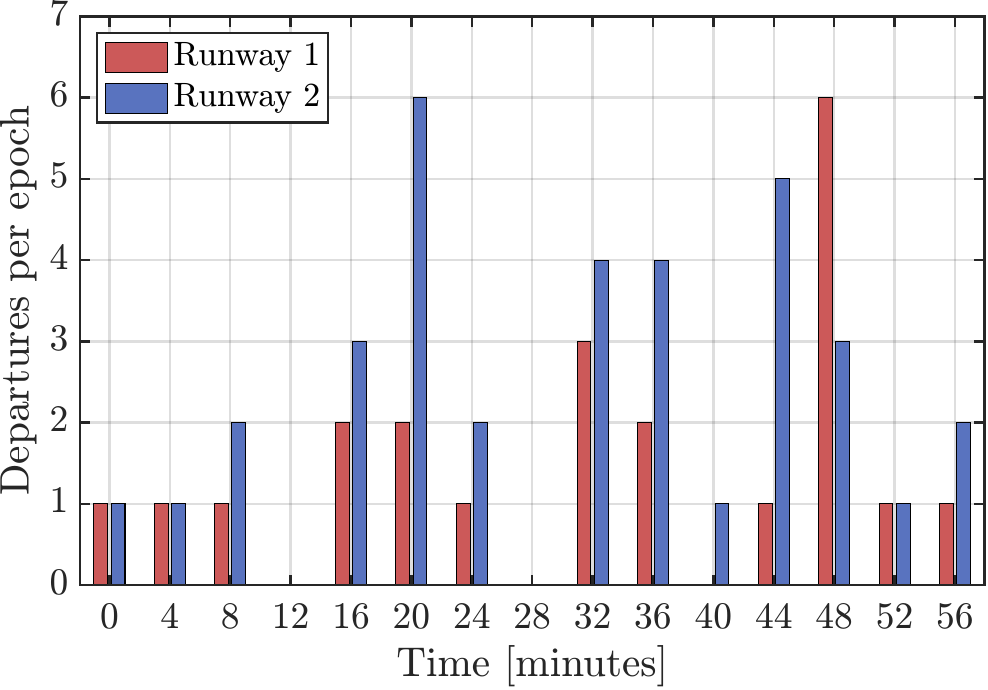}
    \caption{Departure rates per runway for each 4-minute epoch across a 60-minute horizon. Red and blue bars correspond to Runway~1 and Runway~2, respectively. The schedule is synthetically generated with an average departure rate of 1.05 flights per minute.}
    \label{fig:departure}
\end{figure}

The multi-epoch experiment evaluates rolling deployment over a one-hour horizon (16 epochs).
The system starts from empty queues, $q(0)=[0,0]$.
A single departure schedule is generated according to the following specification and kept fixed across experiments.
Aircraft arrivals follow a Poisson process with an arrival rate of 1.05 aircraft per minute, slightly exceeding the nominal system capacity of 1 aircraft per minute
(two runways, each serving 2 aircraft per 4-minute epoch).
Aircraft classes are generated with probabilities 0.4 (small), 0.3 (medium), and 0.3 (heavy), and each flight is randomly assigned to one of five airlines.
A departure schedule used in the simulation is shown in \Cref{fig:departure}.
Across experiments, only the cost noise level $\sigma$ is varied, while the underlying traffic instance remains unchanged.\footnote{If a correlated equilibrium-based solver fails to compute a feasible solution (e.g., no equilibrium found or computation time exceeds the epoch duration), the mechanism defaults to \texttt{FCFS} for that epoch.}

\begin{remark}[Implementation details]
    The experiments were implemented in Julia v1.11.3. We used the packages \texttt{ParametricMCP} \cite{parammcp} and \texttt{PATHSolver} \cite{PATH} with default convergence tolerances to compute correlated equilibria. All simulations were executed on a Linux computer equipped with an Intel Core i7-12700 CPU and 16 GB RAM. Each Monte Carlo experiment consisted of 100 independent trials with fixed random seeds for reproducibility.
\end{remark}

\subsection{Single-epoch experiment result}

\begin{figure}[hbt!]
    \centering
    \includegraphics[width=0.85\linewidth]{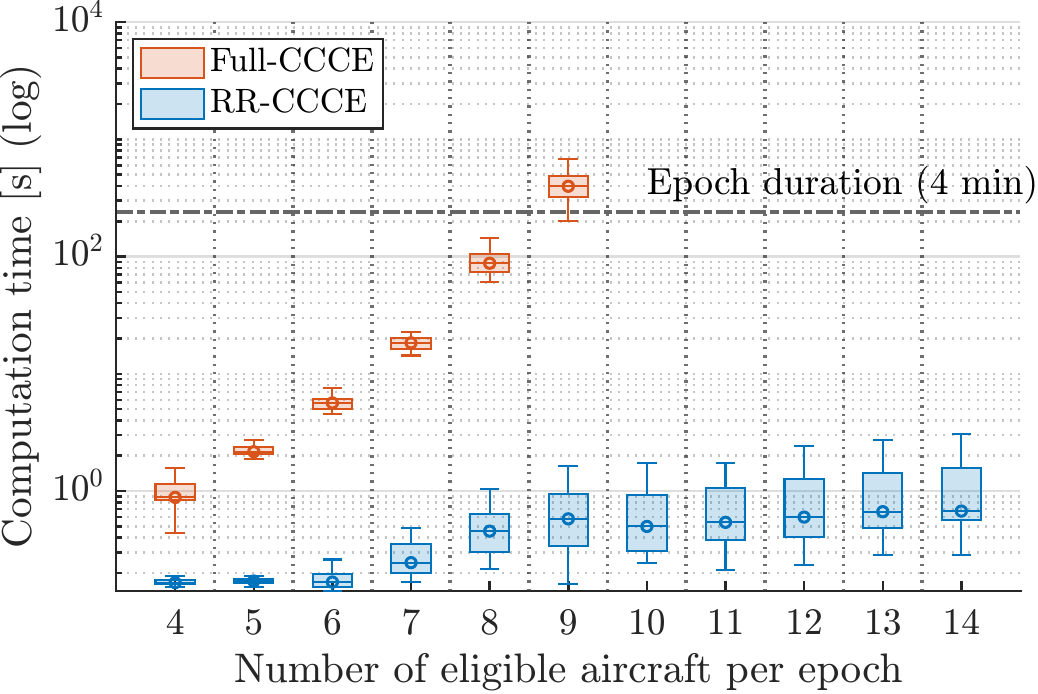}
    \caption{Scalability comparison between \texttt{Full-CCCE} and \texttt{RR-CCCE}.
    Wall-clock computation time (log scale) is shown as a function of the number of eligible aircraft per epoch.
    The horizontal dotted line indicates the epoch duration (4 minutes), representing the real-time constraint for online deployment. Median marked by $\circ$.}
    \label{fig:scalability}
\end{figure}

\begin{figure}[hbt!]
    \centering
    \includegraphics[width=0.85\linewidth]{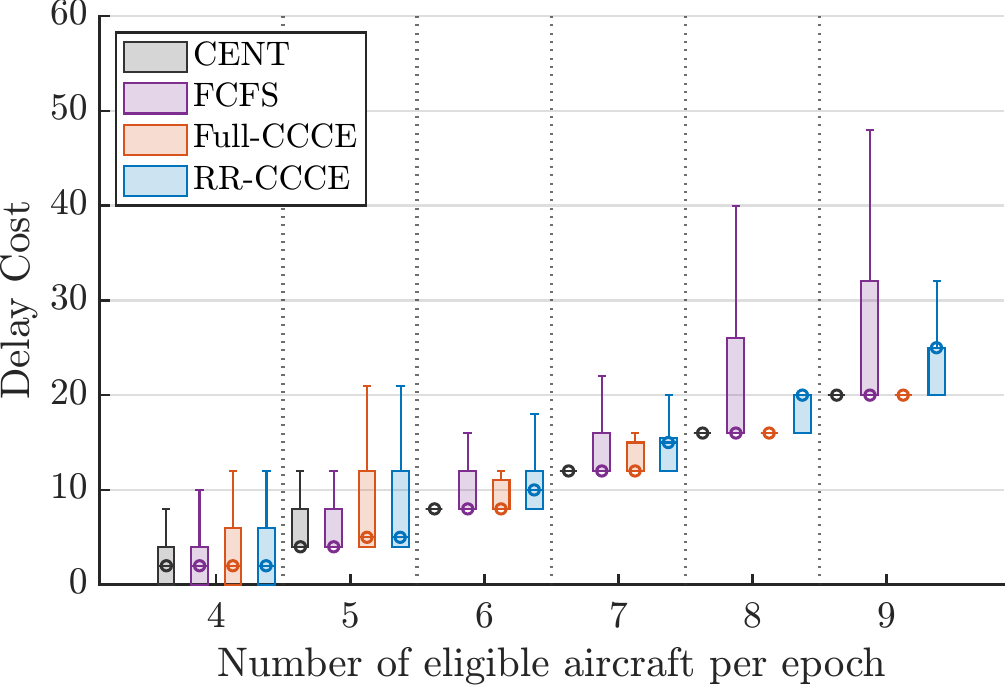}
    \caption{Comparison of realized delay cost across coordination mechanisms.
\texttt{CENT} serves as an idealized performance benchmark.
\texttt{Full-CCCE} and \texttt{RR-CCCE} outperform \texttt{FCFS} as traffic increases.
\texttt{RR-CCCE} exhibits a performance gap relative to \texttt{Full-CCCE} due to equilibrium-set approximation. Median marked by $\circ$.}
    \label{fig:delayCost}
\end{figure}

\subsubsection{Scalability benefit of CC-CE approximation}
To assess the scalability of \texttt{RR-CCCE}, we compare it with \texttt{Full-CCCE}.
\Cref{fig:scalability} shows wall-clock computation time as the number of eligible aircraft increases from 6 to 14.
For \texttt{Full-CCCE}, computation time grows rapidly due to the exponential expansion of the joint-action space, which scales on the order of $2^{|\fSet(t)|}$.
\texttt{RR-CCCE} remains well below the 4-minute epoch duration, while \texttt{Full-CCCE} exceeds this real-time threshold at 9 aircraft per epoch.

\subsubsection{Delay cost}

We compare the realized delay cost of the solution outcomes across solvers under $\sigma=0$.
\Cref{fig:delayCost} shows the cost as the number of eligible aircraft increases.

\texttt{CENT} serves as an idealized benchmark assuming full control authority and is included for reference.
At low traffic levels, medians are similar, though \texttt{Full-CCCE} and \texttt{RR-CCCE} show higher variance.
As traffic increases, the costs for \texttt{FCFS} rises sharply whereas uncertainty-aware methods yield lower costs. In the high-volume regime (8–9 aircraft per epoch), \texttt{Full-CCCE} achieves performance comparable to \texttt{CENT}.
\texttt{RR-CCCE} exhibits a modest performance degradation relative to \texttt{Full-CCCE}, due to the reduced-rank approximation of the equilibrium set.

\subsubsection{Performance under uncertainty}

\begin{figure}[t]
  \centering
    \begin{subfigure}{0.425\textwidth} \label{fig:eff_sigma}
        \includegraphics[width=\linewidth]{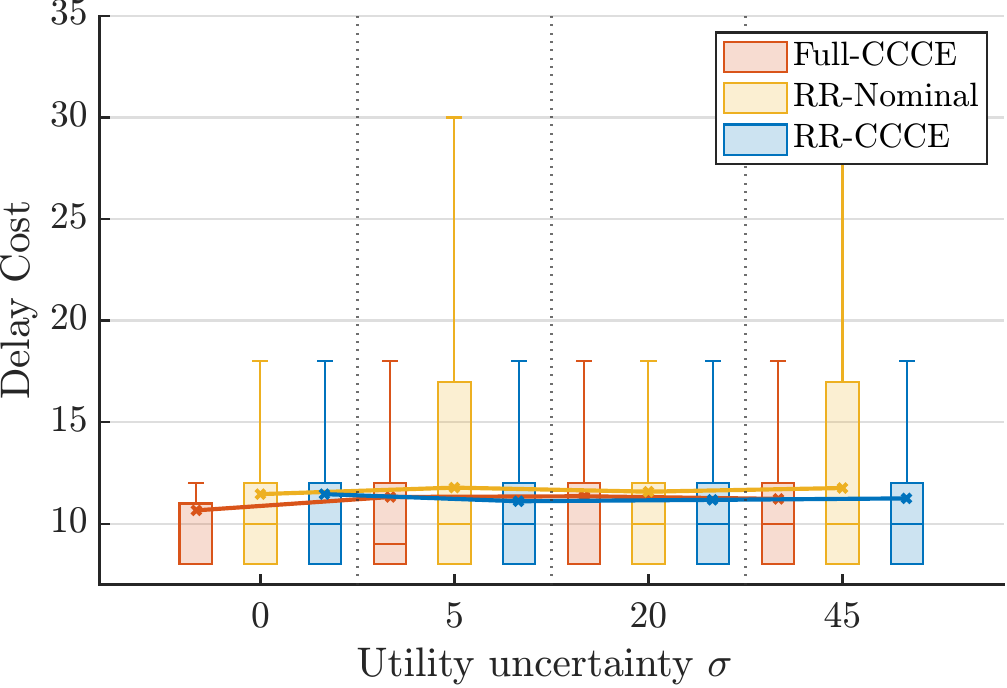}
        \caption{}
    \end{subfigure}
    \begin{subfigure}{0.425\textwidth} \label{fig:devfreq_sigma}
        \includegraphics[width=\linewidth]{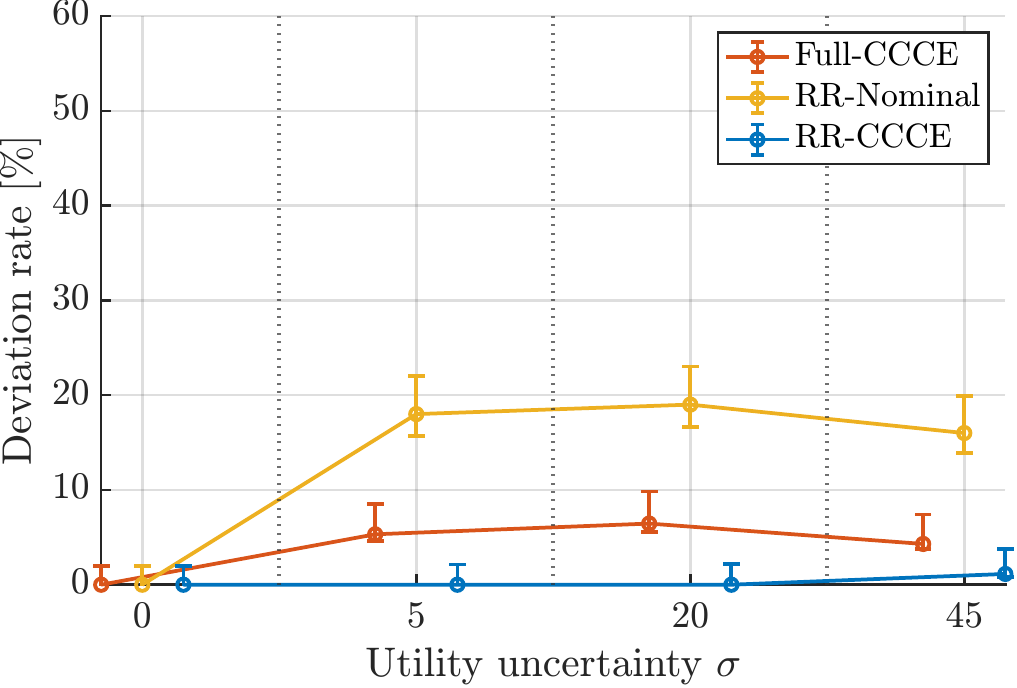}
        \caption{}
    \end{subfigure}
  \caption{Performance under cost uncertainty ($\alpha=90\%$, six airlines).
(a) Delay cost distribution as a function of cost noise level $\sigma$.
(b) Deviation rate versus $\sigma$.
Horizontal lines within boxplots indicate mean values.}

  \label{fig:sigma_2stack}
\end{figure}

\begin{figure}[t]
  \centering
    \begin{subfigure}{0.425\textwidth} \label{fig:eff_alpha}
        \includegraphics[width=\linewidth]{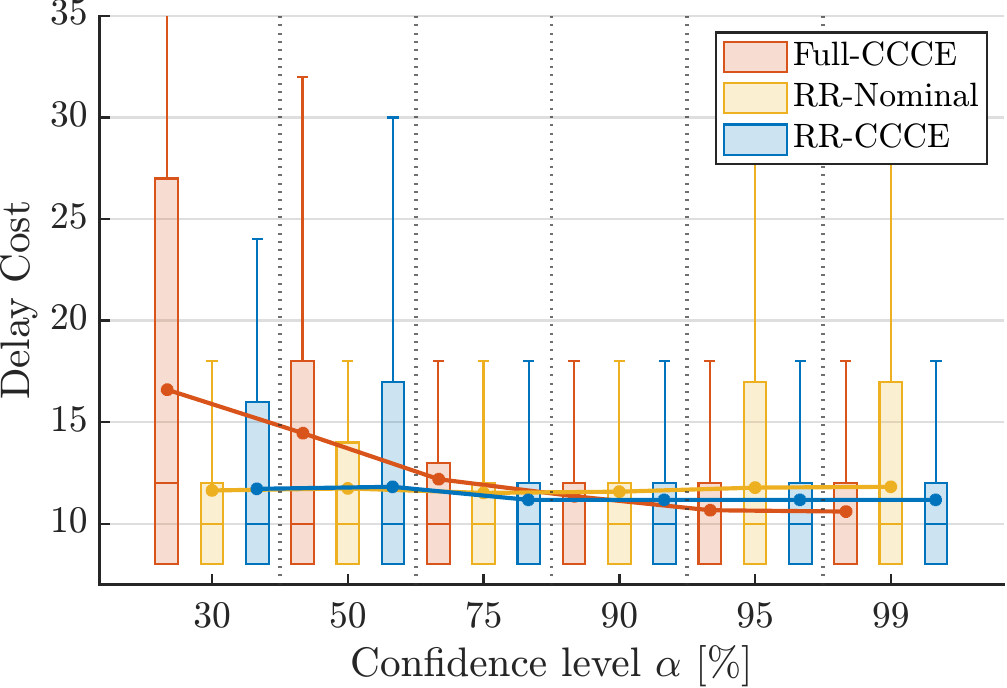}
        \caption{}
    \end{subfigure}
    \begin{subfigure}{0.425\textwidth} \label{fig:devfreq_alpha}
        \includegraphics[width=\linewidth]{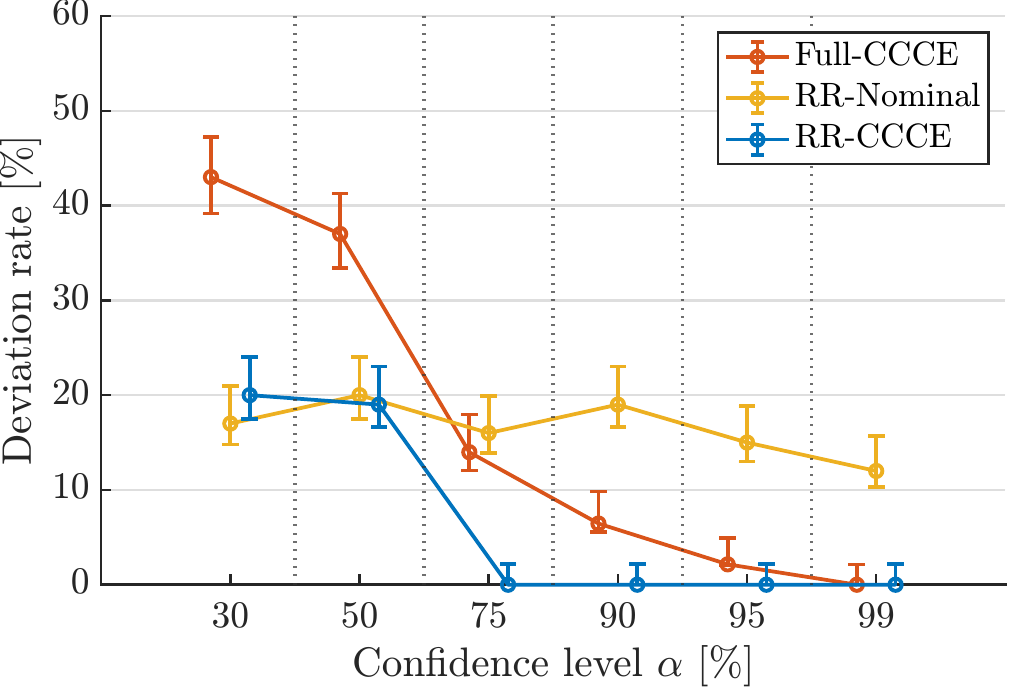}
        \caption{}
    \end{subfigure}
  \caption{Performance under varying confidence levels ($\sigma=20$, six airlines).
(a) Delay cost distribution as a function of confidence level $\alpha$.
(b) Deviation rate versus $\alpha$.
Horizontal lines within boxplots indicate mean values.}
  \label{fig:alpha_2stack}
\end{figure}

We evaluate uncertainty robustness by comparing \texttt{Full-CCCE}, \texttt{RR-Nominal}, and \texttt{RR-CCCE}.
The number of airlines is fixed to six.
When varying the cost noise level $\sigma$, the confidence level is fixed at $\alpha=90\%$.
When varying $\alpha$, the noise level is fixed at $\sigma=20$.

\paragraph{Effect of cost uncertainty ($\sigma$)}

As shown in \Cref{fig:sigma_2stack}, the mean delay costs remain comparable across all methods over the range of uncertainty levels.
While most algorithms exhibit relatively stable cost distributions,
\texttt{RR-Nominal} shows a mild increase in variance in the high-uncertainty regime.
Overall, the delay performance remains largely consistent across methods as $\sigma$ increases.

Regarding deviation frequency, uncertainty-aware methods demonstrate improved robustness.
Notably, under $\alpha=90\%$, \texttt{Full-CCCE} exhibits deviation rates roughly aligned with the confidence level, whereas \texttt{RR-CCCE} maintains near-zero deviation frequency across all tested $\sigma$ values.

\paragraph{Effect of confidence level ($\alpha$)}

As the confidence level $\alpha$ increases, deviation rates for \texttt{Full-CCCE} start from relatively high levels and gradually decline.
\texttt{RR-CCCE} maintains the lowest deviation rate overall and approaches zero deviation for $\alpha \ge 75\%$.
In contrast, \texttt{RR-Nominal} remains with deviation rates consistently in the 10--20\% range across $\alpha$.
These trends are illustrated in \Cref{fig:alpha_2stack}.

In terms of delay cost, \texttt{Full-CCCE} shows a decreasing trend as $\alpha$ increases. The mean costs for the others remain relatively stable across $\alpha$, with the smallest variance and mean value observed around $\alpha \in [0.90, 0.99]$.

\begin{figure}[hbt!]
    \centering
    \includegraphics[width=0.89\linewidth]{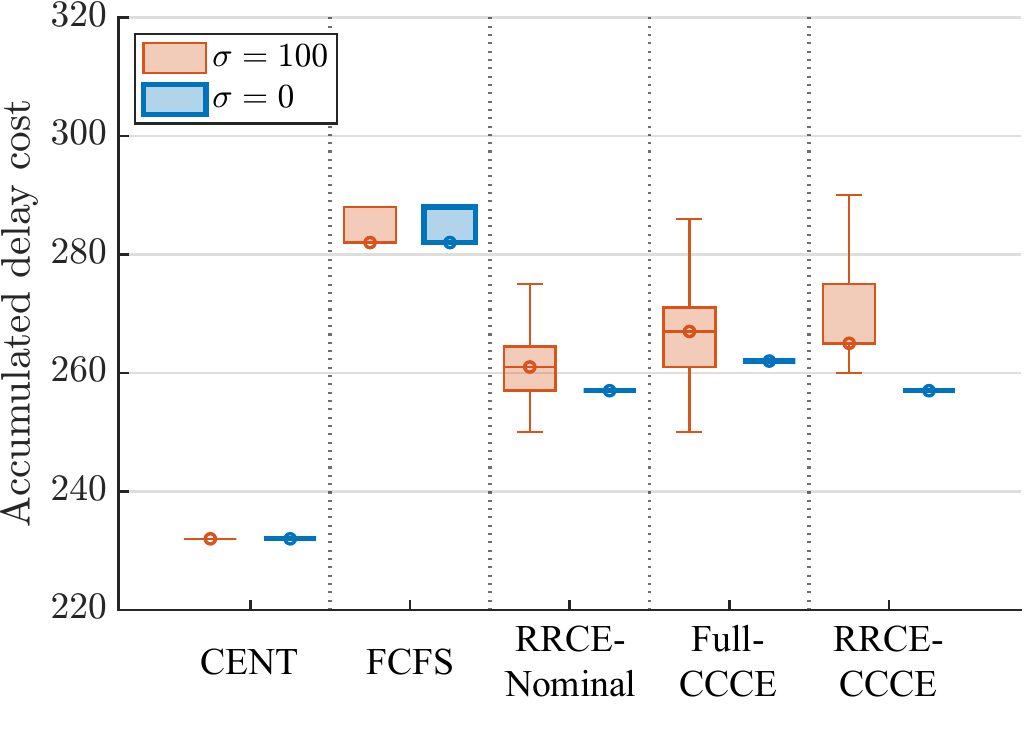}
    \caption{Accumulated delay cost over a 16-epochs rolling simulation.
    Two uncertainty settings are compared: $\sigma=0$ and $\sigma=100$.
    Each boxplot summarizes repeated runs with identical traffic realization and varying cost noise. Median marked by $\circ$.
    }
    \label{fig:multiepoch}
\end{figure}

\subsection{Multi-epoch experiment result}

We evaluate the rolling deployment of each coordination mechanism over a one-hour horizon (16 epochs).
The accumulated cost over the horizon reflects repeated application of the single-epoch policy rather than a long-horizon optimal solution.
This experiment evaluates cumulative effects and operational scalability.

Across both high- and low-$\sigma$ scenarios, correlated equilibrium-based methods achieve approximately 6.0--7.4\% reduction in accumulated delay compared to \texttt{FCFS} as shown in \Cref{fig:multiepoch}. When $\sigma=0$, this improvement increases to approximately 7.1--8.9\%.
Under large uncertainty ($\sigma=100$), \texttt{RR-Nominal} sometimes achieves lower accumulated cost than uncertainty-aware variants. 
This reflects a trade-off between diverse profile selection and robustness: nominal coordination can exploit high-performing equilibria but tolerates higher deviation risk.

\section{Discussion}

\begin{figure}[t!]
  \centering
    \begin{subfigure}{0.425\textwidth} 
        \includegraphics[width=\linewidth]{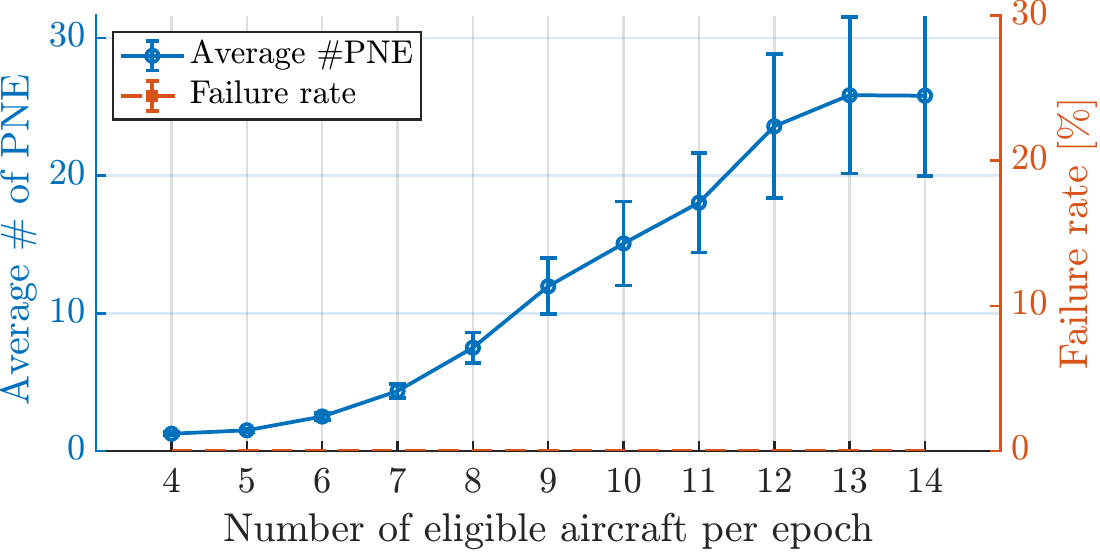}
        \caption{}
    \end{subfigure}
    \begin{subfigure}{0.425\textwidth} 
        \includegraphics[width=\linewidth]{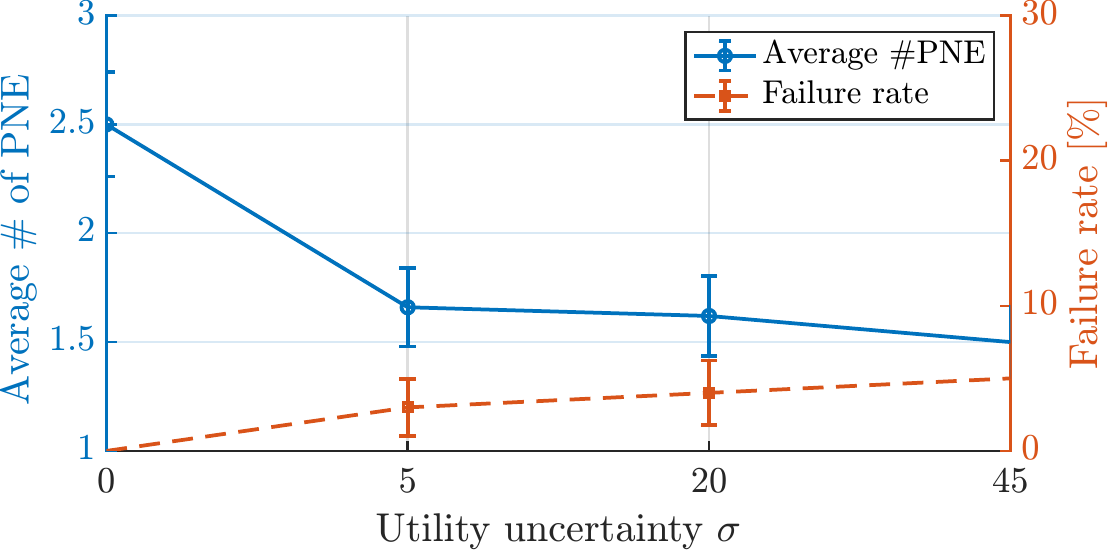}
        \caption{}
    \end{subfigure}
  \caption{Equilibrium-space characteristics.
    (a) Average number of chance-constrained pure Nash equilibria (CC-PNE) and solver failure rate as the number of eligible aircraft increases ($\alpha=90\%$, $\sigma=20$).
    (b) Average number of CC-PNE and failure rate as cost uncertainty $\sigma$ increases ($|\fSet(t)|=8$, $\alpha=90\%$).
    Increasing problem size expands the equilibrium set, whereas higher uncertainty reduces discoverable equilibria and increases failure probability.}
  \label{fig:pne_error_2stack}
\end{figure}

\subsection{Coordination benefit over first-come-first-served (FCFS)}

Across both low- and high-uncertainty settings, correlated equilibrium-based coordination consistently reduces accumulated delay compared to \texttt{FCFS}, which reflects current CVQ first-come-first-served operations.
This confirms that providing incentive-compatible aircraft-level recommendations improves system-level performance without requiring centralized control authority.
The improvement is more pronounced in low-uncertainty environments, where deviation is limited and recommended profiles are more reliably implemented.

\subsection{Confidence level, deviation, and feasible set}

A key observation concerns the interaction between confidence level $\alpha$, deviation rate, and cost performance under uncertainty.
As expected, increasing $\alpha$ reduces the empirical deviation rate.
This reflects the widening of incentive margins in the chance-constrained equilibrium conditions.
At the same time, increasing $\alpha$ contracts the feasible CC-CE region, and in some instances reduces the number of available equilibrium profiles or increases solver failure rates.
A similar contraction effect is observed when cost uncertainty $\sigma$ increases as illustrated in \Cref{fig:pne_error_2stack}.

While the relationship between $\alpha$ and both deviation rate and the size of the feasible equilibrium region is structurally clear, its effect on realized cost is more nuanced.
Reducing deviation increases the likelihood that the recommended profile is implemented,
but shrinking the feasible equilibrium region may eliminate high-performing profiles.
This trade-off is visible in the experiments.

In the single-epoch setting, \texttt{RR-CCCE} and \texttt{Full-CCCE} improve median cost as $\alpha$ increases from 30\% to 75\% and the deviation rate drops (\Cref{fig:alpha_2stack}).
However, the performance of \texttt{Full-CCCE} slightly exceeds that of \texttt{RR-CCCE} in the $\alpha \in [0.95, 0.99]$ regime, and even \texttt{Full-CCCE} emits a nonzero deviation rate.
A similar pattern is observed in the multi-epoch setting, where \texttt{RR-Nominal} sometimes achieves lower accumulated cost under high noise,
at the expense of a higher deviation frequency. This does not contradict the robustness objective; rather, it highlights the risk–performance trade-off inherent in chance-constrained coordination.

\subsection{Algorithmic trade-offs}

Differences between full and reduced formulations further highlight this structural trade-off.
\Cref{fig:alpha_2stack} shows that \texttt{RR-CCCE} exhibits substantially lower deviation rates than \texttt{Full-CCCE} at identical confidence levels.
This can be interpreted as a consequence of the reduced space of equilibria:
the reduced-rank approximation limits the available profiles but implicitly increases robustness.
Overall, the results suggest the existence of a confidence-level ``sweet spot'', balancing profile diversity and implementation reliability.

\section{Conclusion}

We proposed a chance-constrained correlated equilibrium framework for collaborative virtual queue coordination under airline cost uncertainty.
The formulation provides probabilistic incentive guarantees and allows explicit adjustment of confidence levels.
A reduced-rank approach was developed to enable scalable computation.

Numerical experiments demonstrate scalability up to traffic levels corresponding to 210 aircraft eligible for pushback per hour—consistent with operations at the world's busiest airports—and show up to an 8--9\% reduction in accumulated delay compared to the current first-come-first-served mechanism.
The results further reveal a trade-off among confidence level, deviation frequency, and cost efficiency.

Future work includes characterizing conditions under which pure Nash equilibria cease to exist under high uncertainty and systematically identifying optimal confidence-level operating points that balance robustness and performance.

\bibliographystyle{IEEEtran} 
\bibliography{reference}

\vspace{12pt}

\end{document}